\documentclass[letterpaper, 10pt, conference]{ieeeconf}
\IEEEoverridecommandlockouts
\overrideIEEEmargins

\usepackage{cite}
\usepackage{amsmath,amssymb,amsfonts}
\usepackage{algorithm}
\usepackage{graphicx}
\usepackage{subfigure}

\usepackage{tikz}
\usepackage{textcomp}
\usepackage{xcolor}
\usepackage{algpseudocode}

\usepackage{lscape}
\usepackage{siunitx}

\newtheorem{proposition}{Proposition} [section]

\newtheorem{definition}{Definition}[section]

\begin{document}

\title{\bfseries Extended Kalman Filtering for Recursive\\ Online Discrete-Time Inverse Optimal Control}%
\author{Tian Zhao and Timothy L.\ Molloy%
\thanks{The authors are with the CIICADA Lab, School of Engineering, Australian National University (ANU), Canberra, ACT 2601, Australia (e-mail: \{tian.zhao1,timothy.molloy\}@anu.edu.au)}%
}

\maketitle
\thispagestyle{empty}
\pagestyle{empty}

\begin{abstract}
We formulate the discrete-time inverse optimal control problem of inferring unknown parameters in the objective function of an optimal control problem from measurements of optimal states and controls as a nonlinear filtering problem.
This formulation enables us to propose a novel extended Kalman filter (EKF) for solving inverse optimal control problems in a computationally efficient recursive online manner that requires only a single pass through the measurement data.
Importantly, we show that the Jacobians required to implement our EKF can be computed efficiently by exploiting recent Pontryagin differentiable programming results, and that our consideration of an EKF enables the development of first-of-their-kind theoretical error guarantees for online inverse optimal control with noisy incomplete measurements.
Our proposed EKF is shown to be significantly faster than an alternative unscented Kalman filter-based approach.
\end{abstract}

\newcommand\copyrighttext{%
  \footnotesize \textcopyright 2024 IEEE. Personal use of this material is permitted.
  Permission from IEEE must be obtained for all other uses, in any current or future
  media, including reprinting/republishing this material for advertising or promotional
  purposes, creating new collective works, for resale or redistribution to servers or
  lists, or reuse of any copyrighted component of this work in other works.}
\newcommand\copyrightnotice{%
\begin{tikzpicture}[remember picture,overlay]
\node[anchor=south,yshift=5pt] at (current page.south) {\fbox{\parbox{\dimexpr\textwidth-\fboxsep-\fboxrule\relax}{\copyrighttext}}};
\end{tikzpicture}%
}

\IEEEpeerreviewmaketitle
\copyrightnotice

\section{Introduction}

Inverse optimal control (IOC) techniques aim to infer unknown parameters in the objective function of optimal control problems from observed state and control data. 
IOC techniques have been widely developed and applied across various fields including control \cite{Keshavarz2011,Inga2021,awasthi_inverse_2020,Lian2021a,Lian2021,Maillot2013,Pauwels2014,Parsapour2021,Yu2021,Panchea2017,Zhang2018,Zhang2019}, machine learning \cite{Levine2012,jin_2021_pontryagin,Ng2000,Wulfmeier2015,Ziebart2008}, and robotics \cite{Mombaur2010,Aghasadeghi2014,Puydupin2012,Jin2018,Jin2019}.
Despite the range of applications and techniques developed, IOC remains a challenging problem, particularly in cases where the state and control data is incomplete, potentially noise corrupted, and must be processed in a single pass (i.e., online recursively without storing or reprocessing it in batches).
In this paper, we seek to develop a powerful novel approach to discrete-time IOC in such cases by posing and solving the challenging online IOC problem with noisy incomplete measurements as a nonlinear filtering problem.

Many existing approaches to discrete-time IOC have relied on a number of simplifying assumptions including that the state and control data is complete and not explicitly corrupted by (large) measurement noise.
Similarly, the objective function in many treatments of IOC is constructed as a linear combination of given basis functions with unknown parameters (see \cite[Chapter 3]{Molloy2022} and references therein).
Under such assumptions, IOC techniques have been developed to either find parameters that satisfy established optimality conditions such as the Karush-Kuhn-Tucker (KKT) conditions \cite{Puydupin2012,Keshavarz2011,Inga2021,Zhang2018,Zhang2019} or Pontryagin's principle \cite{Johnson2013,jin_2021_pontryagin,Molloy2022,Molloy2020}, or that explicitly optimize loss-functions based on errors between predicted and observed measurements in a bilevel-optimization manner \cite{Mombaur2010,Molloy2022}.
Furthermore, the vast majority of discrete-time IOC methods required multiple passes through the data, making them only suitable for inferring objective-function parameters offline (after all of the measurement data has been collected and stored).
Most discrete-time IOC methods are therefore unsuitable for practical real-time applications such as inferring the objectives of vehicles or users online from noisy sensor data and partial trajectories.




Recent work has addressed some aspects of handling either noisy and/or incomplete measurements in discrete-time IOC including \cite{Zhang2018,Zhang2019,Jin2019,Molloy2016},
However, these approaches require multiple passes through the data, and hence cannot be implemented in an efficient recursive online manner.
On the other hand, online discrete-time IOC methods that involve only a single pass through the data fail to explicitly handle measurement noise and data that may consist of only some states and/or control variables \cite{Molloy2018b,Molloy2020}.
In contrast, recent \emph{continuous-time} IOC approaches based on observers enable online continuous-time IOC with noisy incomplete measurements \cite{Self2019,Self2020,Self2020a,Self2021,Self2022,Lian2021,Lian2021a}.

Most recently, online approaches to inverse open-loop dynamic games, which essentially mirror online IOC approaches, have been studied in the context of autonomous vehicles \cite{Inga2021,cleach_2020_lucidgames}.
In particular, \cite{cleach_2020_lucidgames} proposed a novel approach based on the unscented Kalman filter (UKF). 
However, this UKF approach is computationally expensive and lacks theoretical error bounds (as well as not explicitly being developed for general incomplete measurements).

The key contributions of this paper are: 
\begin{enumerate}
    \item 
    The novel formulation of online discrete-time IOC with noisy incomplete measurements as a nonlinear filtering problem;
    \item
    The proposal of a new computationally efficient extended Kalman filter (EKF) for online discrete-time IOC with noisy incomplete measurements; and,
    \item
    The establishment of theoretical error guarantees for online discrete-time IOC with noisy incomplete measurements (using our EKF).
\end{enumerate}

The paper is structured as follows.
In Section \ref{sec:problem}, we formulate the online inverse optimal control problem with noisy incomplete measurements.
In Section \ref{sec:ekf}, we propose our EKF for online inverse optimal control, including how its required Jacobians can be computed efficiently.
In Section \ref{sec:error_analysis}, we establish conditions under which our proposed EKF will have bounded error.
In Section \ref{sec:results}, we examine the performance of our proposed EKF in simulations on several standard benchmark problems, and compare its computational efficiency to existing approaches. 
Finally, conclusions are presented in Section \ref{sec:conclusion}.

\paragraph*{Notation} The transpose of a matrix (or vector) $A$ will be denoted $A'$.
The identity matrix with dimensions $n \times n$ will be denoted $I_n$, and a matrix of all zeros with dimensions $n \times m$ will be denoted $0_{n \times m}$. 
Where the appropriate dimensions can be determined from context, we may also write $I$ or $0$ for the identity matrix and matrix/vector of zeros, respectively.
For matrices $A$ and $B$, the inequality $A \preceq B$ means that the difference $B - A$ is positive semidefinite.
We use $\| \cdot \|$ to denote the Euclidean $2$-norm (for vectors) or the matrix norm it induces, and $E [ \cdot ]$ to denote the expectation operator.

\section{Problem Formulation}
\label{sec:problem}

Consider a discrete-time deterministic system
\begin{equation}
    \label{eq:dynamics}
    \quad x_{t+1} = f\left({x}_t, {u}_t\right) \ \  \text{with given} \ {x}_0 \in \mathbb{R}^n
\end{equation}
for $t = 0, 1, \ldots, T$ where $0 < T < \infty$ is a given finite horizon, $x_t \in \mathbb{R}^n$ is the system state, $u_t \in \mathbb{R}^m$ is the control input, and $f: \mathbb{R}^n \times \mathbb{R}^m \rightarrow \mathbb{R}^n$ is a (potentially nonlinear) function describing the system dynamics, which we assume to be twice differentiable.
Let $x_{0:T}$ and $u_{0:T-1}$ denote state and control sequences $\{x_t : 0 \leq t \leq T\}$ and $\{u_t : 0 \leq t \leq T-1\}$, respectively.
Define the (parameterized) finite-horizon objective function
\begin{equation}
    \label{eq:objective_function}
    J(x_{0:T}, u_{0:T-1}, \theta)
    \triangleq c_T\left({x}_T, {\theta}\right) + \sum_{t=0}^{T-1} c_t\left({x}_t, {u}_t, {\theta}\right)
\end{equation}
where $\theta$ is a vector of parameters from the set $\Theta \subset \mathbb{R}^N$, and $c_t : \mathbb{R}^n \times \mathbb{R}^m \times \Theta \rightarrow \mathbb{R}$ and $c_T : \mathbb{R}^n \times \Theta \rightarrow \mathbb{R}$ are (parameterized) stage and terminal cost functions, respectively.
We assume that the cost functions $c_t$, and $c_T$ are twice differentiable.

In (standard or forward) discrete-time optimal control, the aim is to find state and control sequences that solve
\begin{align}
 \label{eq:ocp}
 \begin{aligned}
  &\underset{x_{0:T}, u_{0:T-1}}{{\inf}}  & & J(x_{0:T}, u_{0:T-1}, \theta)\\
  &\mathrm{s.t.} & & x_{t+1} = f (x_t, u_t), \quad t \geq 0 \\
  & & & x_{0} \in \mathbb{R}^n
 \end{aligned}
\end{align}
given the finite horizon $T$, the dynamics $f$, the time-invariant parameter vector $\theta$, and the functions $c_t$ and $c_T$. 
Let $x_{0:T}^\theta \triangleq \{x_t(\theta) : 0 \leq t \leq T\}$ and $u_{0:T-1}^\theta \triangleq \{u_t(\theta) : 0 \leq t \leq T - 1\}$ denote a sequence of states $x_t(\theta) \in \mathbb{R}^n$ and controls $u_t(\theta) \in \mathbb{R}^m$ that solve \eqref{eq:ocp} with given parameters $\theta$.

In this paper, we consider an \emph{online inverse optimal control} problem in which we aim to infer the parameters $\theta$ of the optimal control problem \eqref{eq:ocp} online from sequentially received noisy incomplete measurements of optimal states and controls $x_{0:T}^\theta = \{x_t(\theta) : 0 \leq t \leq T\}$ and $u_{0:T-1}^\theta = \{u_t(\theta) : 0 \leq t \leq T - 1\}$.
We assume that $F_t$ (and $g_t$) are known along with the initial state $x_0$, horizon $T$, the dynamics $f$, and the stage and terminal cost functions $c_t$ and $c_T$.
Specifically, at each $t = 0, 1, \ldots, T-1$, we receive a noisy incomplete measurement given by
\begin{align}
   \label{eq:measurements}
    y_t 
    &= F_t g_t(\theta) + v_t
\end{align}
where $v_t \sim N(0, R)$ is an independent and identically distributed Gaussian noise process (uncorrelated with the parameters $\theta$) with zero mean and covariance matrix $R \in \mathbb{R}^{q \times q}$.
Here, the functions $g_t : \mathbb{R}^N \rightarrow \mathbb{R}^{n + m}$ describe the implicit mapping of the parameters $\theta$ to the corresponding optimal states and controls at time $t$ through the solution of \eqref{eq:ocp}.
That is,
\begin{align*}
    g_t (\theta)
    &= \begin{bmatrix}
        x_t(\theta)\\
        u_t(\theta)
    \end{bmatrix}
\end{align*}
and, $F_t \in \mathbb{R}^{q \times (n + m)}$ is an arbitrary (potentially time-varying) matrix.
For example, if all the states and controls at time $t$ are measured then $F_t = I_{n + m}$; if only the states at time $t$ are measured then 
$ F_t = \begin{bmatrix}
        I_n & 0_{n \times m}
    \end{bmatrix}
$; and if only the controls at time $t$ are measured then $
F_t = \begin{bmatrix}
        0_{m \times n} & I_m
    \end{bmatrix}$.
We seek to use the measurements $\{ y_t : 0 \leq t \leq T-1\}$ sequentially online as they are received without storing or reprocessing them in batches (since storage and batch processing would delay inference).

Our online inverse optimal control problem differs from offline inverse optimal control problems in that the measurements must be processed sequentially online and not in batches (cf.~\cite[Section 3.7]{Molloy2022} and references therein).
It also differs from previous formulations of online inverse optimal control in discrete time (e.g., \cite{Molloy2018b,Molloy2020,Jin2018,cleach_2020_lucidgames}) by considering measurements corrupted by stochastic noise with partial state and/or control information.
Its consideration of stochastic noise and a finite horizon $T$ also differentiates it from previous online inverse optimal control formulations in continuous time (e.g., \cite{Self2019,Self2020,Self2020a,Self2021,Self2022,Lian2021,Lian2021a}).
Importantly, our formulation of online inverse optimal control will enable us to solve it as a nonlinear filtering problem using a novel computationally efficient recursive EKF.

\section{Proposed Extended Kalman Filter for Online Inverse Optimal Control}
\label{sec:ekf}

In this section, we develop our EKF for online inverse optimal control.

\subsection{Online Inverse Optimal Control as Nonlinear Filtering}

To develop the EKF, we first recast our online inverse optimal control problem as a nonlinear filtering problem.
Recalling that the unknown parameters $\theta \in \Theta$ to be inferred are time invariant and that the measurements $y_t$ are given by \eqref{eq:measurements}, the relationship between the parameters and the measurements are described by the nonlinear stochastic system
\begin{subequations}
\label{eq:ssModel}
\begin{align}
\label{eq:ss_dynamics}
    \theta_{t+1} 
    &= \theta_t\\\label{eq:ss_measurement}
    y_t 
    &= F_t g_t(\theta_t) + v_t
\end{align}
\end{subequations}
for $0 \leq t \leq T-1$ with initial conditions $\theta_0 = \theta \in \Theta$.
This system is nonlinear by virtue of the function $g_t$ describing the mapping of the parameters to optimal states and controls at time $t$.
With this nonlinear stochastic system description, the online inverse optimal control problem of inferring $\theta$ sequentially from the measurements $y_t$ constitutes a nonlinear filtering problem.

\subsection{EKF Algorithm}
We assign a Gaussian prior distribution $\theta \sim N(\hat{\theta}_0, P_0)$ with mean $\hat{\theta}_0 \in \mathbb{R}^N$ and covariance matrix $P_0 \in \mathbb{R}^{N \times N}$ to the unknown parameters $\theta$.

Due to the nonlinearity of the stochastic system \eqref{eq:ssModel}, we propose using an EKF to recursively compute an approximate Gaussian posterior distribution of the parameters $\theta$ at each time $t$ given current and previous measurements $y_{0:t}$.
This approximating Gaussian at time $t$ has mean $\hat{\theta}_t \in \mathbb{R}^N$ and covariance matrix $P_t \in \mathbb{R}^{N \times N}$.
Specialization of standard EKF equations (cf.\ \cite[Chapter 7]{sarkka2023bayesian}) to the system \eqref{eq:ssModel} implies that the mean $\hat{\theta}_t$ and covariance $P_t$ are given by the recursions
\begin{subequations}
\label{eq:ekf}
\begin{align}
\label{eq:ekf_predicted_covariance}
P_{t | t-1}
&= P_{t-1} + Q_t\\
K_t
&=P_{t|t-1} G_t' \left(G_t P_{t|t-1} G_t' + R \right)^{-1}\\\label{eq:ekf_update}
\hat{\theta}_t
&= \hat{\theta}_{t-1} + K_t \left(y_t - F_t g_t(\hat{\theta}_{t-1})\right) \\
P_t
&=P_{t | t-1} - K_t G_t P_{t | t-1}
\end{align}
\end{subequations}
for $1 \leq t \leq T-1$ with the initial filter mean and covariance taken from the prior $N(\hat{\theta}_0, P_0)$ where $P_{t | t-1} \in \mathbb{R}^{N \times N}$ are the predicted covariances; $Q_t \in \mathbb{R}^{N \times N}$ are \emph{user-tuneable} matrices (possibly equal to $0$); $K_t \in \mathbb{R}^{N \times q}$ are \emph{Kalman gains}; and $G_t \in \mathbb{R}^{q \times N}$ are the Jacobians of the deterministic part of the measurement equation \eqref{eq:ss_measurement} evaluated at $\hat{\theta}_{t-1}$, i.e.,
\begin{align}
\label{eq:jacobian}
 G_t
 \triangleq \dfrac{\partial }{\partial \theta}F_tg_t(\hat{\theta}_{t-1})
 = F_t \begin{bmatrix}
	\dfrac{\partial x_t}{\partial   {\theta}}(\hat{\theta}_{t-1})\\[0.25cm]
	\dfrac{\partial u_t}{\partial   {\theta}}(\hat{\theta}_{t-1})
 \end{bmatrix}.
 \end{align}
 
Our inclusion of the tuneable matrices $Q_t$ in our EKF \eqref{eq:ekf} is without stochastic justification because the associated nonlinear stochastic system \eqref{eq:ssModel} has no process noise (hence the matrices $Q_t$ are not the covariances).
Our inclusion of them is instead motivated by results showing that the consideration of user-selectable matrices $Q_t$ can offer practical as well as theoretical benefits to the performance of EKFs (cf.\ \cite[Section III]{reif1999stochastic} and references therein).
We shall discuss the selection of $Q_t$ in later sections analysing the error properties of our proposed EKF for online inverse optimal control.

The key challenge faced in implementing the EKF \eqref{eq:ekf} is computing the partial derivatives of the states $x_t(\theta)$ and controls $u_t(\theta)$ solving \eqref{eq:ocp} with $\theta = \hat{\theta}_{t-1}$ in the Jacobians \eqref{eq:jacobian}.
The perceived difficulty of computing such derivatives online at each time step $t$ has, in the past, led to EKF approaches to online inverse optimal control being ignored in favor of alternative filtering approaches that avoid Jacobian calculations, such as the UKF \cite{cleach_2020_lucidgames}.
To render our EKF tractable, we next exploit recent Pontryagin Differentiable Programming (PDP) results from \cite{jin_2021_pontryagin} to compute these derivatives.

\subsection{Jacobian via Pontryagin Differentiable Programming}

To compute the Jacobians in \eqref{eq:jacobian} using PDP \cite{jin_2021_pontryagin}, let us define the Hamiltonian function associated with the (parameterized) optimal control problem \eqref{eq:ocp} as
\begin{align*}
    H(t,x_t, u_t, p_{t+1}, \theta) 
    &\triangleq c_t(x_t, u_t, \theta) + f(x_t, u_t)' p_{t+1}
\end{align*}
where $p_{t} \in \mathbb{R}^n$ are costate (or adjoint) variables.
Since the stage-cost functions $c_t$, terminal-cost function $c_T$, and dynamics $f$  are twice continuously differentiable, the discrete-time Pontryagin's principle establishes that if the trajectories $x_{0:T}^\theta$ and $u_{0: T-1}^{\theta}$ solve \eqref{eq:ocp} then there exist a corresponding sequence of costates $p_{1:T}^\theta = \{ p_t(\theta) : 1 \leq t \leq T \}$ satisfying
\begin{align}
\label{eq:costate}
\begin{split}
 p_t(\theta)
 &= \frac{\partial  c_t}{\partial  {x}_t}(x_t(\theta), u_t(\theta), \theta) \\
 &\quad+ {\frac{\partial  {f}'}{\partial  {x}_t} }(x_t(\theta), u_t(\theta))  { p}_{t+1}(\theta) 
\end{split}
\end{align}
for $1 \leq t \leq T-1$ with $p_T(\theta) = \frac{\partial c_T}{\partial x_T}(x_t(\theta), \theta)$.

Recall that in our EKF \eqref{eq:ekf} we are interested in computing the Jacobian $G_t$ satisfying \eqref{eq:jacobian} at time $t$ given the parameter estimate $\hat{\theta}_{t-1}$ from the previous time step $t-1$.
By solving the (forward) discrete-time optimal control problem \eqref{eq:ocp} with $\theta = \hat{\theta}_{t-1}$ and iterating the costate equation \eqref{eq:costate}, we generate \emph{predicted optimal sequences} $x_{0:T}^{\hat{\theta}_{t-1}} = \{ x_k(\hat{\theta}_{t-1}) : 0 \leq k \leq T\}$, $u_{0:T-1}^{\hat{\theta}_{t-1}} = \{ u_k(\hat{\theta}_{t-1}) : 0 \leq k \leq T-1\}$, and $p_{1:T}^{\hat{\theta}_{t-1}} = \{ p_k(\hat{\theta}_{t-1}) : 1 \leq k \leq T\}$.
Here we use $0 \leq k \leq T$ as the time index to highlight that these sequences are not occurring concurrently with the (true) time steps of the filter (i.e., $t$ and $\hat{\theta}_{t-1}$ are fixed, and we varying the time index $k$ to examine states, controls, and costates at different points in the horizon assuming that $\theta = \hat{\theta}_{t-1}$ in \eqref{eq:ocp}).

Given the predicted optimal sequences $x_{0:T}^{\hat{\theta}_{t-1}}$, $u_{0:T-1}^{\hat{\theta}_{t-1}}$, and $p_{1:T}^{\hat{\theta}_{t-1}}$ and the time index $k$ seperate from the time on which the EKF is operating, we can now follow \cite{jin_2021_pontryagin} by defining the following Jacobians and Hessians:
\begin{equation}
\label{eq:jacobianAndHessians}
\begin{aligned}
A_k 
&\triangleq \frac{\partial  {f}}{\partial  x_t} (x_k(\hat{\theta}_{t-1}), u_k(\hat{\theta}_{t-1}))\\
B_k 
&\triangleq \frac{\partial  {f}}{\partial  u_t} (x_k(\hat{\theta}_{t-1}), u_k(\hat{\theta}_{t-1})) \\ 
H_k^{x x}
&\triangleq \frac{\partial^2 H}{\partial  x_t \partial  x_t} (k, x_k(\hat{\theta}_{t-1}), u_k(\hat{\theta}_{t-1}), p_{k+1}(\hat{\theta}_{t-1}), \hat{\theta}_{t-1}) \\
H_k^{x e} 
&\triangleq \frac{\partial^2 H}{\partial  {x}_t \partial  {\theta}} (k, x_k(\hat{\theta}_{t-1}), u_k(\hat{\theta}_{t-1}), p_{k+1}(\hat{\theta}_{t-1}), \hat{\theta}_{t-1}) \\
H_k^{x u} 
&\triangleq \frac{\partial^2 H}{\partial  {x}_t \partial  {u}_t} (k, x_k(\hat{\theta}_{t-1}), u_k(\hat{\theta}_{t-1}), p_{k+1}(\hat{\theta}_{t-1}), \hat{\theta}_{t-1}) \\
&\triangleq \left(H_k^{u x}\right)^{\prime}, \\
H_k^{u u} 
&\triangleq \frac{\partial^2 H}{\partial  {u}_t \partial  {u}_t}(k, x_k(\hat{\theta}_{t-1}), u_k(\hat{\theta}_{t-1}), p_{k+1}(\hat{\theta}_{t-1}), \hat{\theta}_{t-1}) \\
H_k^{u e} 
&\triangleq \frac{\partial^2 H}{\partial  {u}_t \partial  {\theta}}, (k, x_k(\hat{\theta}_{t-1}), u_k(\hat{\theta}_{t-1}), p_{k+1}(\hat{\theta}_{t-1}), \hat{\theta}_{t-1}) \\
H_T^{x x} 
&\triangleq \frac{\partial^2 c_T}{\partial  {x}_T \partial  {x}_T}(x_T(\hat{\theta}_{t-1}), \hat{\theta}_{t-1}) \\
H_T^{x e} 
&\triangleq \frac{\partial^2 c_T}{\partial  {x}_T \partial  {\theta}} (x_T(\hat{\theta}_{t-1}), \hat{\theta}_{t-1}).
\end{aligned}
\end{equation}
Let us also define new ``state'' and ``control'' variables representing the unknown partial derivatives of the optimal states and controls evaluated at $\hat{\theta}_{t-1}$, namely,
\begin{equation}
    X_k(\hat{\theta}_{t-1}) \triangleq \frac{\partial x_k}{\partial   {\theta}}(\hat{\theta}_{t-1}) \text{ and } U_k(\hat{\theta}_{t-1}) \triangleq \frac{\partial u_k}{\partial {\theta}} (\hat{\theta}_{t-1}).
\end{equation}
Note that by setting $k = t$, the partial derivatives $X_t(\hat{\theta}_{t-1})$ and $U_t(\hat{\theta}_{t-1})$ correspond to those required to compute the Jacobian $G_t$ satisfying \eqref{eq:jacobian} at time $t$ in our EKF \eqref{eq:ekf}.
The following proposition, based on the results of \cite{jin_2021_pontryagin}, establishes efficient recursions for computing these derivatives.

\begin{proposition}
\label{proposition:derivatives}
Suppose that $H_k^{uu}$ is invertible for all $0 \leq k \leq T-1$.
Then $X_{0:T}^{\hat{\theta}_{t-1}} \triangleq \{X_k(\hat{\theta}_{t-1}) : 0 \leq k \leq T \}$ and $U_{0:T-1}^{\hat{\theta}_{t-1}} \triangleq \{ U_k(\hat{\theta}_{t-1}) : 0 \leq k \leq T-1 \}$ can be obtained via the recursions:
\begin{subequations}
\label{eq:firstRecursions}
\begin{align}
  \mathcal{P}_k
  &= \mathcal{Q}_k + \mathcal{A}_k'(I + \mathcal{P}_{k+1} \mathcal{R}_k)^{-1}\mathcal{P}_{k+1}\mathcal{A}_k\\
  \mathcal{W}_k
  &= \mathcal{Q}_k'(I + \mathcal{P}_{k+1}\mathcal{R}_k)^{-1}(\mathcal{W}_{k+1}+\mathcal{P}_{k+1}\mathcal{M}_k) + \mathcal{N}_k
\end{align}
\end{subequations}
for $0 \leq k \leq T-1$ where $\mathcal{P}_T = H_T^{xx}$ and $\mathcal{W}_T = H_T^{xe}$, together with
\begin{subequations}
\label{eq:secondRecursions}
\begin{align}\notag
  &U_k(\hat{\theta}_{t-1})\\\notag
  &= -(H_k^{uu})^{-1}(H_k^{ux}X_k(\hat{\theta}_{t-1}) + H_k^{ue} \\
  &\quad+ B_k'(I + \mathcal{P}_{k+1}\mathcal{R}_k)^{-1})^{-1}(\mathcal{P}_{k+1} \mathcal{A}_{k+1}  X_k(\hat{\theta}_{t-1}) \\\notag
  &\quad+ \mathcal{P}_{k+1} \mathcal{M}_k + \mathcal{W}_{k+1})\\
  &X_{k+1}(\hat{\theta}_{t-1}) = A_k X_k(\hat{\theta}_{t-1}) + B_k U_k(\hat{\theta}_{t-1})
\end{align}
\end{subequations}
for $0 \leq k \leq T$ with $X_0(\hat{\theta}_{t-1}) = 0$ and where $\mathcal{A}_k = A_k - B_k(H_k^{uu})^{-1}H_k^{ux}$, $\mathcal{R}_k = B_k(H_k^{uu})^{-1} B_k'$, $M_k = -B_k(H_k^{uu})^{-1}H_k^{ue}$, $\mathcal{Q}_k = H_k^{xx} - H_k^{xu}(H_k^{uu})^{-1}H_k^{ux}$, $\mathcal{N}_k = H_k^{xe} - H_k^{xu}(H_k^{uu})^{-1}H_k^{ue}$ can be computed via \eqref{eq:jacobianAndHessians}  by solving \eqref{eq:ocp} with $\theta = \hat{\theta}_{t-1}$ and iterating \eqref{eq:costate}.
\end{proposition}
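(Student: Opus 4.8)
The plan is to derive the recursions by differentiating the discrete-time Pontryagin necessary conditions with respect to $\theta$ and then solving the resulting linear two-point boundary-value problem by a backward Riccati-type sweep, mirroring the PDP construction of \cite{jin_2021_pontryagin}. First I would record the optimality conditions that the predicted optimal sequences satisfy for parameters in a neighbourhood of $\hat{\theta}_{t-1}$: the dynamics $x_{k+1} = f(x_k, u_k)$, the costate equation \eqref{eq:costate}, the stationarity condition $\partial H/\partial u_k = 0$, the terminal condition $p_T = \partial c_T/\partial x_T$, and the fixed initial state $x_0$. Because $H_k^{uu}$ is invertible, the implicit function theorem guarantees that $x_k(\theta)$, $u_k(\theta)$, and $p_k(\theta)$ are differentiable in $\theta$, so the derivative quantities $X_k$, $U_k$, and $P_k \triangleq \partial p_k/\partial \theta$ are well defined.

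Next I would differentiate each optimality condition with respect to $\theta$. Since the dynamics $f$ do not depend on $\theta$, differentiating the dynamics yields exactly $X_{k+1} = A_k X_k + B_k U_k$ with $X_0 = 0$; differentiating the stationarity condition gives $H_k^{ux} X_k + H_k^{uu} U_k + H_k^{ue} + B_k' P_{k+1} = 0$; differentiating the costate equation (noting $\partial^2 H/\partial x_k \partial p_{k+1} = A_k'$) gives $P_k = H_k^{xx} X_k + H_k^{xu} U_k + H_k^{xe} + A_k' P_{k+1}$; and differentiating the terminal condition gives $P_T = H_T^{xx} X_T + H_T^{xe}$. Invertibility of $H_k^{uu}$ then lets me solve the stationarity relation for $U_k$ and eliminate it, producing a reduced linear boundary-value problem in $(X_k, P_k)$ whose coefficient matrices are precisely $\mathcal{A}_k$, $\mathcal{R}_k$, $\mathcal{M}_k$, $\mathcal{Q}_k$, and $\mathcal{N}_k$ as defined in the statement, each computable via \eqref{eq:jacobianAndHessians} from the solution of \eqref{eq:ocp} with $\theta = \hat{\theta}_{t-1}$.

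Finally I would close the system with the affine sweep ansatz $P_k = \mathcal{P}_k X_k + \mathcal{W}_k$, substitute it into the reduced equations, and verify by backward induction from $k = T$ — where the terminal condition forces $\mathcal{P}_T = H_T^{xx}$ and $\mathcal{W}_T = H_T^{xe}$ — that the ansatz propagates consistently, reading off the recursions \eqref{eq:firstRecursions} for $\mathcal{P}_k$ and $\mathcal{W}_k$. With $\mathcal{P}_k$ and $\mathcal{W}_k$ in hand, the algebraic expression for $U_k$ together with the forward propagation $X_{k+1} = A_k X_k + B_k U_k$ give \eqref{eq:secondRecursions}. I expect the main obstacle to be the matrix algebra in the backward step: collecting the $U_k$ terms produces a factor $(H_k^{uu} + B_k'\mathcal{P}_{k+1}B_k)^{-1}$ that must be rewritten, via a push-through / matrix-inversion (Woodbury) identity, into the stated $(I + \mathcal{P}_{k+1}\mathcal{R}_k)^{-1}$ form, and I would need to confirm that $I + \mathcal{P}_{k+1}\mathcal{R}_k$ is invertible at each step so that the recursions are well posed.
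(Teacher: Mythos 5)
Your proposal is correct and follows essentially the same route as the paper: the paper's proof simply invokes Lemmas 5.1 and 5.2 of the cited Pontryagin Differentiable Programming work, and the argument you outline --- differentiating the Pontryagin conditions under the implicit function theorem, reducing to a linear two-point boundary-value problem, and closing it with the affine sweep $P_k = \mathcal{P}_k X_k + \mathcal{W}_k$ --- is precisely the derivation behind those lemmas. Your caveat about verifying invertibility of $I + \mathcal{P}_{k+1}\mathcal{R}_k$ is well placed, as that condition is inherited from the cited lemmas rather than established by the proposition's stated hypothesis on $H_k^{uu}$ alone.
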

\begin{proof}
Follows from Lemmas 5.1 and 5.2 of \cite{jin_2021_pontryagin}.
\end{proof}

In light of Proposition \ref{proposition:derivatives}, the partial derivatives $X_t(\hat{\theta}_{t-1})$ and $U_t(\hat{\theta}_{t-1})$, and hence the Jacobian $G_t$, at time $t$ in our EKF \eqref{eq:ekf} can be computed efficiently by: 1) solving \eqref{eq:ocp} with $\theta = \hat{\theta}_{t-1}$ and iterating \eqref{eq:costate}; 2) computing the Jacobians and Hessians in \eqref{eq:jacobianAndHessians}; and, 3) solving the recursions \eqref{eq:firstRecursions} and \eqref{eq:secondRecursions} until $k=t$ giving $X_t(\hat{\theta}_{t-1})$ and $U_t(\hat{\theta}_{t-1})$. 
We highlight that the recursions \eqref{eq:firstRecursions} and \eqref{eq:secondRecursions} essentially constitute the same routine as solving a linear-quadratic optimal control problem (see \cite{jin_2021_pontryagin} for more details), and thus the calculation of the Jacobian $G_t$ at time $t$ in our EKF \eqref{eq:ekf} requires the solution of only two optimal control problems --- the original optimal control problem \eqref{eq:ocp} and the linear-quadratic-like problem implicitly associated with the recursions in Proposition \ref{proposition:derivatives}.
The need to solve only two optimal control problems at each time $t$ (with only one being the original) compares surprisingly favourably with previous approaches.
For example, the UKF approach proposed in \cite{cleach_2020_lucidgames} requires the original optimal control problem \eqref{eq:ocp} to be solved multiple times at each time $t$, specifically, at each time $t$, it requires \eqref{eq:ocp} to be solved once for each sigma-point, with the number of sigma-points in a UKF usually taken to be $2N + 1$ (where $N$ is the dimension of the unknown parameters $\theta$).

\subsection{Proposed Algorithm Summary}

Our proposed EKF for online inverse optimal control is summarized in Algorithm \ref{alg:ekf}.
The algorithm for computing the Jacobian at time $t$ is summarized in Algorithm \ref{alg:jaco}.
The system dynamics $f$, cost functions $\{c_t : 0 \leq t \leq T\}$, initial state $x_0$, and horizon $T$ are given (although the parameters $\theta$ of the cost functions are unknown). 
The output of the EKF at each time $t$ is the estimated unknown parameters $\hat{\theta}_t$ and the associated covariance $P_t$, providing a recursive solution to online inverse optimal control with noisy incomplete measurements.

\let\oldReturn\Return
\renewcommand{\Return}{\State\oldReturn}

\begin{algorithm}[t!]
\caption{Proposed Extended Kalman Filter for IOC}
\label{alg:ekf}
\begin{algorithmic}[1]
\Procedure{EKF}{$\hat{\theta}_0$, $P_0$, $\{Q_t : 0 \leq t < T\}$, $R$, $T$, $\{F_t : 0 \leq k \leq T\}$, $f$, $\{c_t : 0 \leq t \leq T\}$}
    \For {$t = 1,..,T-1$}
    
    \State // Compute Predicted Covariance
    \State $P_{t|t-1} \leftarrow P_{t-1} + Q_t$  
    
    \State // Solve forward optimal control for trajectories
    \State $x_{0:T}^{\hat{\theta}_{t-1}}$, $u_{0:T-1}^{\hat{\theta}_{t-1}} \leftarrow$ via solving \eqref{eq:ocp} with $\theta = \hat{\theta}_{t-1}$.
    \State // Compute Jacobian
    \State $G_t \leftarrow \textsc{jacobian} (t, \hat{\theta}_{t-1}, x_{0:T}^{\hat{\theta}_{t-1}},\cdots$\\
    $\quad \qquad \qquad \qquad\qquad u_{0:T-1}^{\hat{\theta}_{t-1}}, F_{t}, f, \{c_t : 0 \leq t \leq T\})$
    \State // Compute Kalman Gain
    \State $K_t \leftarrow P_{t|t-1} G_t' \left(G_t P_{t|t-1} G_t' + R \right)^{-1}$ 

    \State // New measurement at time $t$
    \State Receive noisy measurement $y_t$   \Comment{Eq. \ref{eq:measurements}}

    \State // Update Parameter Mean
    \State $\hat{\theta}_t \leftarrow \hat{\theta}_{t-1} + K_t \left(y_t - F_t g_t(\hat{\theta}_{t-1})\right)$
    \State // Update Covariance
    \State $P_t \leftarrow P_{t | t-1} - K_t G_t P_{t | t-1}$
    \EndFor
\EndProcedure

\end{algorithmic}
\end{algorithm}

\begin{algorithm}[t!]
\caption{Jacobian Computation}
\label{alg:jaco}
\begin{algorithmic}[1]

\Procedure{Jacobian}{$t, \hat{\theta}_{t-1}, x_{0:T}^{\hat{\theta}_{t-1}}, u_{0:T-1}^{\hat{\theta}_{t-1}}, \ F_t, f, \{c_t : 0 \leq t \leq T\}$}      
    

    

    \State // Compute costates
    \State $p_{1:T}^{\hat{\theta}_{t-1}} \leftarrow$ via \eqref{eq:costate} using $\{c_t : 0 \leq t \leq T\}$ and $f$
    \State // Compute Jacobians and Hessians
    \State Evaluate \eqref{eq:jacobianAndHessians} using $\{c_t : 0 \leq t \leq T\}$ and $f$
    \State // Solve linear-quadratic optimal control problem
    \State $X_{0:T}^{\hat{\theta}_{t-1}}$,$U_{0:T-1}^{\hat{\theta}_{t-1}} \leftarrow $ via \eqref{eq:firstRecursions} \& \eqref{eq:secondRecursions} \Comment{Proposition \ref{proposition:derivatives}}
    \State  // Select Jacobian for available measurements
    \State $ G_t \leftarrow F_t[X_{t}({\hat{\theta}_{t-1}})' \quad U_{t}({\hat{\theta}_{t-1}})']' $

    
    \Return{$G_t$}
\EndProcedure
\end{algorithmic}
\end{algorithm}

\section{Error Analysis and Guarantee}
\label{sec:error_analysis}

In this section, we establish conditions under which our proposed EKF \eqref{eq:ekf} will have bounded error in the sense of the following definition.

\begin{definition}[{Mean-Squared Boundedness \cite{reif1999stochastic}}]
\label{definition:msb}
 A stochastic process $\{\zeta_t \in \mathbb{R}^N : t \geq 0\}$ is said to be (exponentially) \emph{mean-squared bounded} if there exist real numbers $\eta, \nu > 0$ and $0 < \vartheta < 1$ such that
 \begin{align*}
     E[\| \zeta_t \|^2] \leq \eta \| \zeta_0 \|^2 \vartheta^t + \nu
 \end{align*}
 holds for all $t \geq 0$.
\end{definition}

To perform our error analysis, let us define the estimation error of our EKF \eqref{eq:ekf} at time $t$ as 
\begin{align}
    \label{eq:estimationError}
    e_t 
    &\triangleq \theta - \hat{\theta}_{t-1}.
\end{align}
Let us also note that in light of the results of the preceding section, the deterministic part of the measurement equation \eqref{eq:ss_measurement} has a Taylor series expansion around the estimate $\hat{\theta}_{t-1}$ given by 
\begin{align}
    \label{eq:taylor}
    F_tg_t(\theta) - F_t g_t(\hat{\theta}_{t-1})
    &= G_t e_t + \chi(\theta, \hat{\theta}_{t-1})
\end{align}
where $\chi ( \cdot, \cdot)$ is a suitable nonlinear function that accounts for the higher-order terms of the expansion.
The following proposition establishes (sufficient) conditions under which the error $e_t$ of our EKF \eqref{eq:ekf} is mean-squared bounded in the sense of Definition \ref{definition:msb}.

\begin{proposition}
    \label{proposition:msb}
    Consider our EKF \eqref{eq:ekf} for the nonlinear stochastic system model of inverse optimal control in \eqref{eq:ssModel} and suppose that:
    \begin{enumerate}
        \item
        There exist positive reals $\bar{g},\underline{p},\bar{p}, \underline{q}, \underline{r}, \delta >0$ such that
        \begin{align}\label{eq:gCondition}
            \| G_t \| &\leq \bar{g}\\\label{eq:pCondition}
            \underline{p}I &\preceq P_{t | t-1} \preceq \bar{p}I\\\label{eq:qCondition}
            \underline{q} I &\preceq Q_t \preceq \delta I \\
            \underline{r} I &\preceq R \preceq \delta I
        \end{align}
        for all $0 \leq t \leq T-1$; and,
        \item 
        There exist positive reals $\epsilon_\chi, \kappa_\chi > 0$ such that the nonlinear function $\chi$ satisfies
        \begin{align}
            \| \chi (\theta, \hat{\theta}) \| \leq \kappa_\chi \| \theta - \hat{\theta}\|^2 \; \mathrm{ for } \; \| \theta - \hat{\theta}\| \leq \epsilon_\chi.
        \end{align}
    \end{enumerate}
    Then, the estimation error $e_t$ given by \eqref{eq:estimationError} is mean-squared bounded in the sense of Definition \ref{definition:msb} provided that the initial estimation error is bounded, i.e., provided that $\| \theta - \hat{\theta}_0 \| \leq \epsilon$ for some $\epsilon > 0$.
\end{proposition}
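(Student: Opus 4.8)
The plan is to follow the stochastic stability framework for the discrete-time EKF developed in \cite{reif1999stochastic}, specialized to our inverse optimal control system \eqref{eq:ssModel} in which the ``state'' $\theta$ is constant (so the linearized state-transition map is the identity) and the only stochastic disturbance is the measurement noise $v_t$. Concretely, I would construct a stochastic Lyapunov function for the error process $\{e_t\}$ and verify that it satisfies the hypotheses of the stochastic stability lemma of \cite{reif1999stochastic}, which then yields mean-squared boundedness in the sense of Definition \ref{definition:msb}.

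First I would derive the error recursion. Substituting the measurement model \eqref{eq:ss_measurement} and the Taylor expansion \eqref{eq:taylor} into the mean update \eqref{eq:ekf_update}, and using $e_t \triangleq \theta - \hat{\theta}_{t-1}$ from \eqref{eq:estimationError}, a direct calculation gives
\begin{align*}
e_{t+1} = (I - K_t G_t) e_t - K_t \chi(\theta, \hat{\theta}_{t-1}) - K_t v_t .
\end{align*}
This isolates the linear contraction factor $I - K_t G_t$, the nonlinear remainder $\chi$, and the zero-mean noise term $K_t v_t$. I would then adopt the Lyapunov candidate $V_t(e_t) \triangleq e_t' P_{t|t-1}^{-1} e_t$; the covariance bounds \eqref{eq:pCondition} immediately give $\bar{p}^{-1}\|e_t\|^2 \le V_t(e_t) \le \underline{p}^{-1}\|e_t\|^2$, which is the quadratic sandwich required by the stability lemma.

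The core step is to evaluate the conditional expectation $E[V_{t+1}(e_{t+1}) \mid e_t]$ using the error recursion. The independence and zero mean of $v_t$ eliminate the cross terms linear in the noise and contribute a bounded term controlled by $\underline{r}$, $\delta$, $\bar{g}$, and the gain bounds implied by \eqref{eq:gCondition}--\eqref{eq:qCondition}; the purely linear part $(I-K_tG_t)e_t$ produces the contraction, for which I would invoke the standard Kalman identities (in particular $P_t = (I - K_tG_t)P_{t|t-1}$ and $P_{t+1|t} = P_t + Q_{t+1}$) together with the uniform bounds to establish a contraction estimate of the form $(I-K_tG_t)' P_{t+1|t}^{-1}(I-K_tG_t) \preceq (1-\lambda) P_{t|t-1}^{-1}$ for some $0 < \lambda \le 1$; and the nonlinear terms involving $\chi$ would be absorbed using the local quadratic bound $\|\chi\| \le \kappa_\chi \|e_t\|^2$. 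Combining these yields an inequality of the form $E[V_{t+1}\mid e_t] - V_t \le \mu - \lambda V_t$, valid whenever $\|e_t\| \le \epsilon_\chi$.

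The hard part will be controlling the nonlinear remainder, since its quadratic bound holds only \emph{locally} (for $\|\theta-\hat{\theta}\|\le\epsilon_\chi$): the term $\chi$ enters $E[V_{t+1}\mid e_t]$ both quadratically (through $\|K_t\chi\|^2$) and through a cross term with $(I-K_tG_t)e_t$, and these must be dominated by the linear contraction so that the net coefficient of $\|e_t\|^2$ stays strictly below that of $V_t$. This is precisely where the smallness of the initial error $\|\theta - \hat{\theta}_0\| \le \epsilon$ is needed: it guarantees the error starts inside the region where the quadratic bound is valid, and the contraction then keeps it there in the mean-squared sense. Once the supermartingale-type inequality is established on this region, I would conclude by invoking the stochastic stability lemma of \cite{reif1999stochastic}, which converts it into the exponential mean-squared bound $E[\|e_t\|^2] \le \eta \|e_0\|^2 \vartheta^t + \nu$ of Definition \ref{definition:msb}.
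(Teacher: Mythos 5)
Your proposal is correct and follows essentially the same route as the paper: the paper's proof simply invokes \cite[Theorem 3.1]{reif1999stochastic} (noting that the identity dynamics $\theta_{t+1}=\theta_t$ make several of its hypotheses automatic), and what you have written is precisely an unpacking of that theorem's proof --- the error recursion $e_{t+1}=(I-K_tG_t)e_t - K_t\chi - K_tv_t$, the Lyapunov function $V_t = e_t'P_{t|t-1}^{-1}e_t$, the contraction, noise, and remainder lemmas, and the concluding stochastic stability lemma --- specialized to the constant-parameter case. No gap; you have just done by hand what the paper does by citation.
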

\begin{proof}
    Follows from \cite[Theorem 3.1]{reif1999stochastic}, with the constant dynamics in \eqref{eq:ss_dynamics} implying that several of its conditions hold automatically (e.g., \cite[Eq.\ (28) \& (33)]{reif1999stochastic}).
\end{proof}

Proposition \ref{proposition:msb} is novel in the context of online inverse optimal control since it provides the first error bounds (and insight into convergence rates) for a discrete-time online inverse optimal control method.
Previous results in \cite{Molloy2018b,Molloy2020} have only established conditions under which the solutions to discrete-time inverse optimal control problems are unique (without consideration of the error in these unique estimates).
Combining our new error-analysis results with these existing uniqueness results is a focus of our current research efforts.

Proposition \ref{proposition:msb} is also of considerable practical importance because its conditions provide insight into understanding when our EKF \eqref{eq:ekf} will yield accurate parameter estimates.
In particular, the condition on the tuneable matrices $Q_t$ in \eqref{eq:qCondition} suggests that they should be selected such that they are bounded away from $0$.
The conditions on the Jacobians $G_t$ in \eqref{eq:gCondition} and covariance matrices $P_{t | t-1}$ in \eqref{eq:pCondition} are analogous to observability (of the linearized system) or persistence of excitation conditions since they can be verified given specific measurement data to determine if the estimates produced by our EKF \eqref{eq:ekf} are reliable or not (see also the discussion of these conditions in \cite[Section IV]{reif1999stochastic}).
Similar conditions have been developed in the context of both offline \cite{Molloy2018,Molloy2022} and online \cite{Molloy2018b,Molloy2020} inverse optimal control before, these previous conditions only hold in the case of noise-free complete measurements of the form $y_t = \begin{bmatrix} x_t(\theta)' & u_t(\theta)'\end{bmatrix}'$, and not in the case of noisy incomplete measurements that could arise from the more general measurement model in \eqref{eq:ss_measurement}.

\section{Simulation Results}
\label{sec:results}

In this section, we utilise the benchmark problems presented and implemented in \cite{jin_2021_pontryagin} to illustrate and compare the practical performance of our algorithm. Additionally, we evaluate its computational efficiency compared to a UKF approach based on that proposed in \cite{cleach_2020_lucidgames}. 

\subsection{Benchmark Problems}

\begin{table}[t!]
\centering
\caption{Benchmark Problems Dimensions and Measurement Noise Covariance Diagonal (Noise Magnitude)} 

\begin{tabular}{cccc}
\hline
Problem & $q = m + n$  &  $N$ & Noise Magnitude (diag of R) \\
\hline

Single pendulum & 3 & 2 & \num{1e-7} \\

Cart pole & 5 &  4 & \num{1e-6}\\

Robot arm & 6 &  4 & \num{1e-5}\\

Quadrotor & 17 & 4 &  \num{1e-7} \\

Rocket landing & 16 & 5 & \num{1e-6} \\

\hline
\end{tabular}
\end{table}

To illustrate our proposed EKF algorithm for IOC with noisy measurements but complete state and control information, we simulated the single pendulum, cart pole, quadrotor, robot arm and rocket powered landing benchmark problems from \cite{jin_2021_pontryagin}. 
For each benchmark problem, the system dynamics are nonlinear and deterministic, with unknown parameters $\theta$ in the objective function to be inferred from simulated states and controls. 
The total number of states and controls $n + m$ in each example, which corresponds to the measurement dimensions $q$ in this complete-information case ($F_t = I_{n + m}$), and the number of unknown parameters $N$ are detailed in Table I.


\begin{figure*}[htbp]
\label{fig:benchmark_problems}
\centering
\subfigure[Single pendulum]{
\includegraphics[width=0.7\columnwidth]{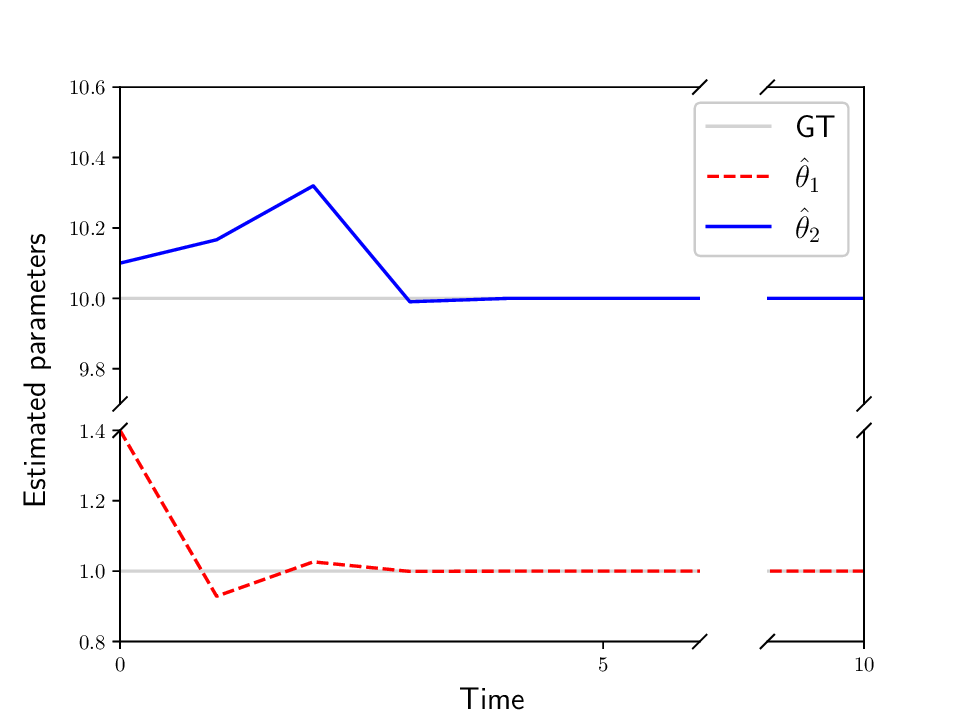}
}
\hspace{-8mm}
\subfigure[Cart pole]{
\includegraphics[width=0.7\columnwidth]{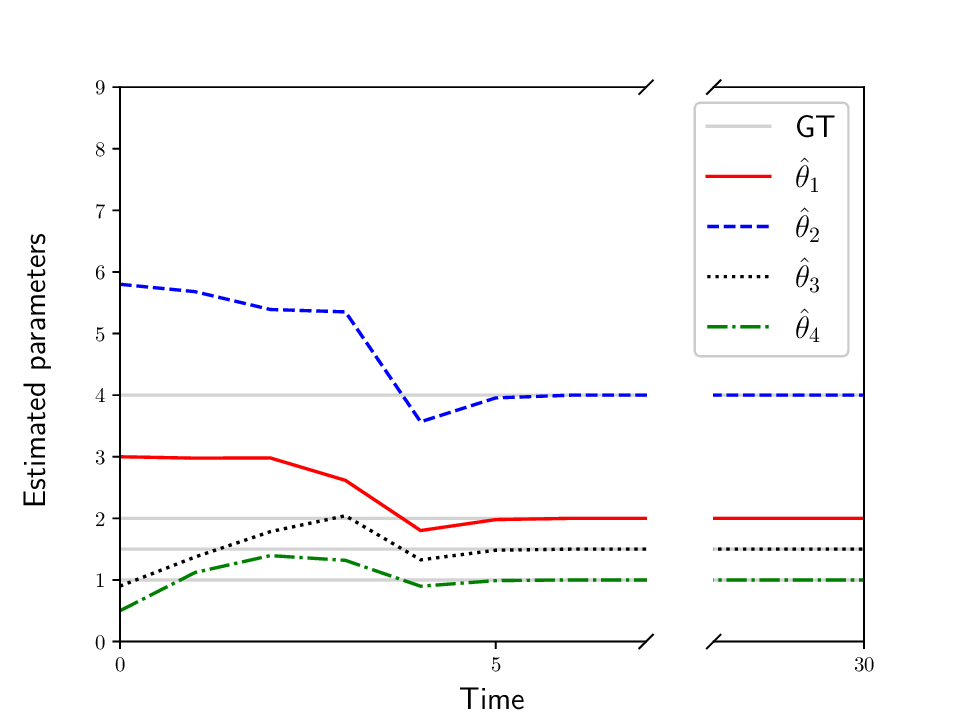}
}
\hspace{-8mm}
\subfigure[Quadrotor]{
\includegraphics[width=0.7\columnwidth]{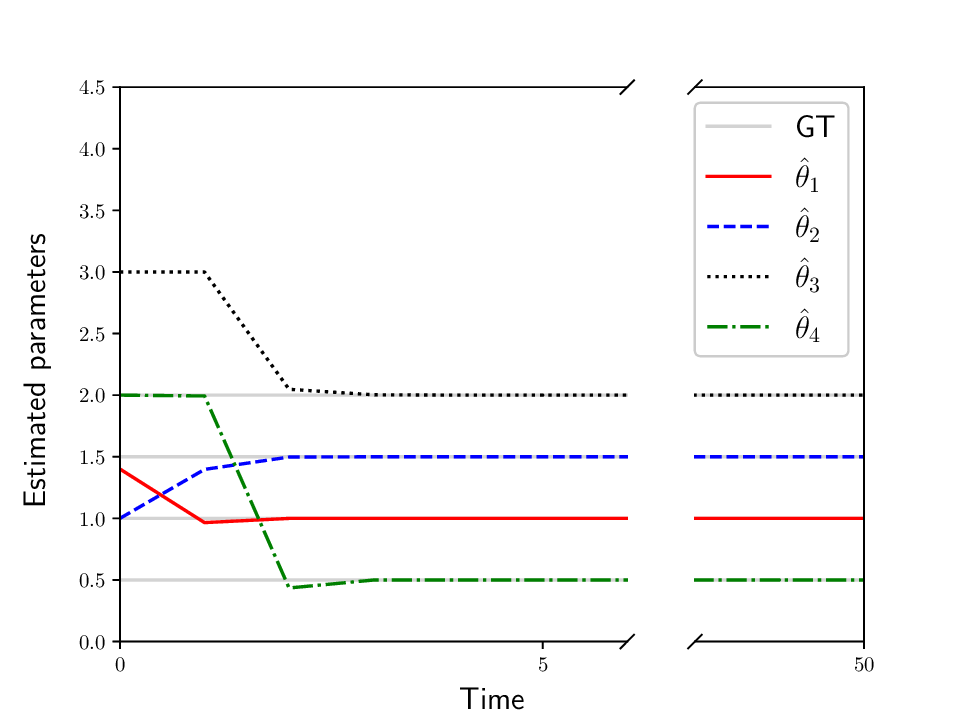}
}
\hspace{-8mm}
\subfigure[Robot arm]{
\includegraphics[width=0.7\columnwidth]{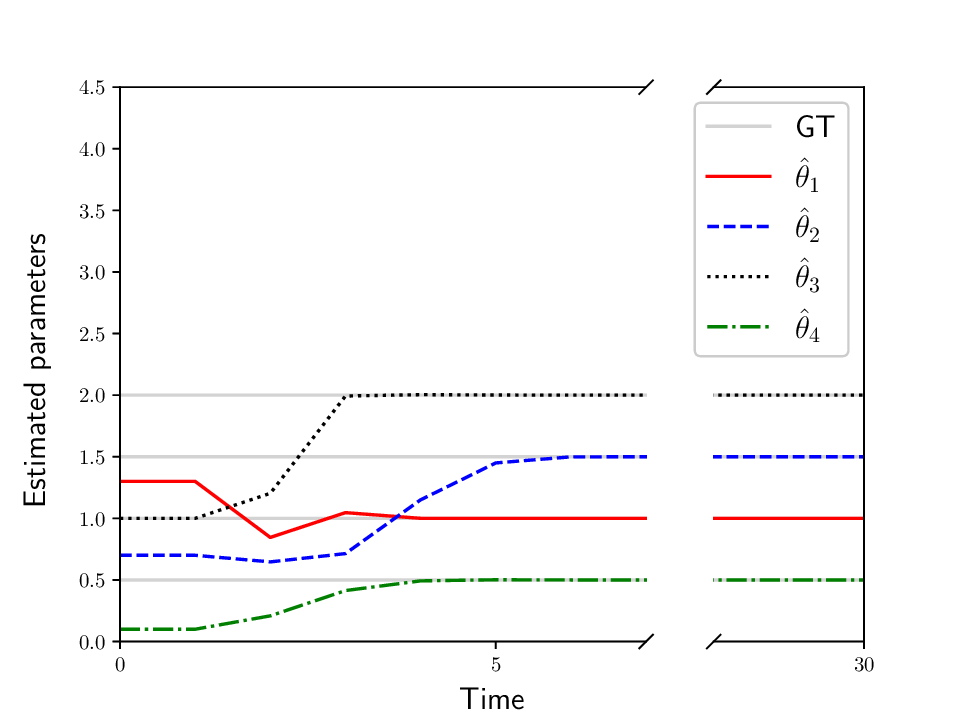}
}
\hspace{-8mm}
\subfigure[Rocket powered landing]{
\includegraphics[width=0.7\columnwidth]{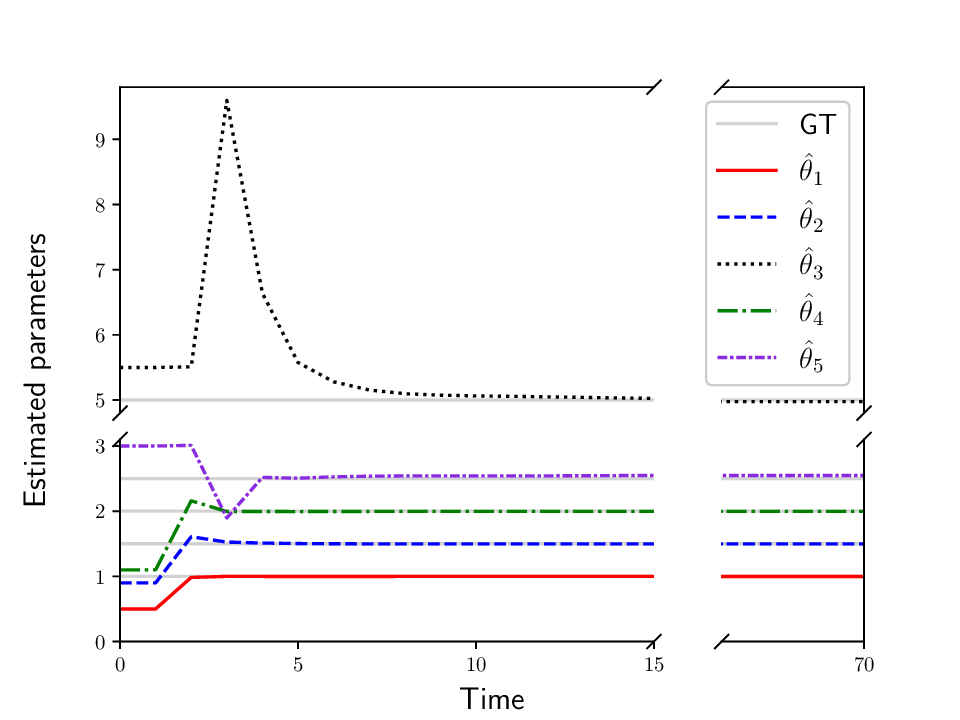}
}
\hspace{-8mm}
\subfigure[Execution time comparison]{
\includegraphics[width=0.7\columnwidth]{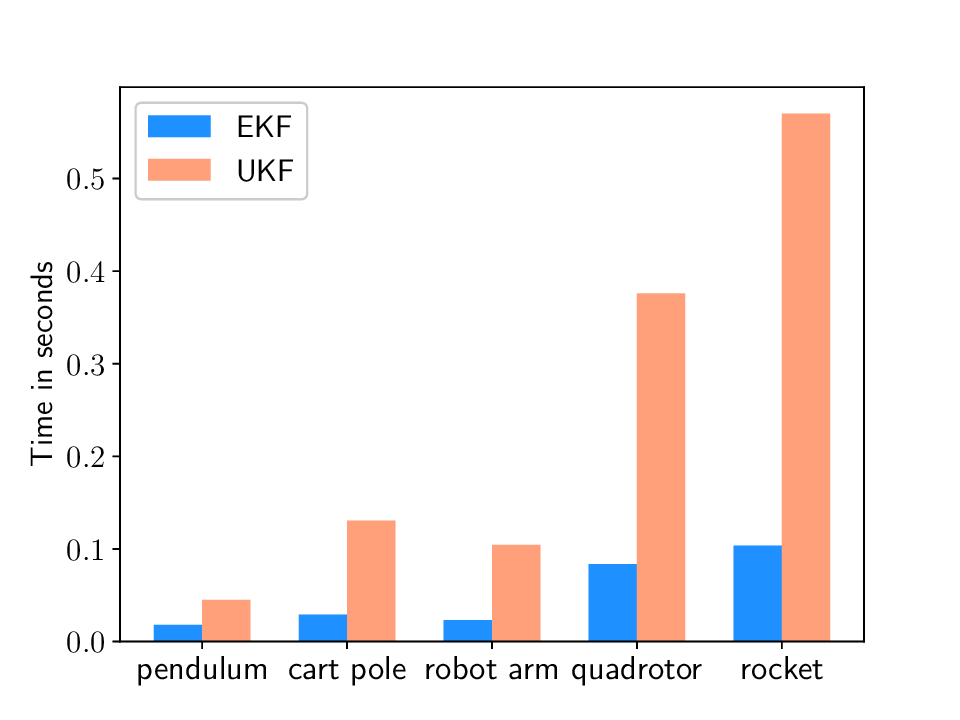}
}
\caption{Performance of our proposed EKF on benchmark problems with ground truth (GT): (a) Single pendulum with ground truth parameters $[\theta_1 = 1, \theta_2 = 10]$; (b) Cart pole with ground truth parameters $[\theta_1 = 2, \theta_2 = 4, \theta_3 = 1.5, \theta_4 = 1]$; (c) Quadrotor with ground truth parameters $[\theta_1=1.0, \theta_2 = 1.5, \theta_3 = 2, \theta_4 = 0.5]$; (d) Robot arm with ground truth parameters $[\theta_1=1.0, \theta_2 = 1.5, \theta_3 = 2, \theta_4 = 0.5]$; (e) Rocket powered landing with ground truth parameters $[\theta_1=1.0, \theta_2 = 1.5, \theta_3 = 2, \theta_4 = 2.5, \theta_5 =5 ]$; (f) Execution time for each time step with respect to EKF and UKF.}
\end{figure*}

\subsubsection{Noise Simulations}
For each benchmark problem, we generated measurements by adding Gaussian measurement noise $v_t$ to the states and controls.
The measurement noise covariance matrices $R$ were diagonal with the diagonal elements given in Table I.
We applied our proposed EKF algorithm to these simulated trajectories and the resulting estimates are shown in (a) - (e) of Fig. 1. 
The results show that our proposed EKF algorithm converges to the true parameters of the objective function (with negligible error). 
We note that even in this case of relatively small measurement noise, the existing recursive online discrete-time inverse optimal control approach of \cite{Molloy2020} is known to perform poorly (and hence its performance is not reported).
The UKF-based approach of \cite{cleach_2020_lucidgames} yielded similarly accurate parameter estimates to our proposed EKF.

\subsubsection{Computational Efficiency}
We also recorded the computational time at each time step $t$ required by both our proposed EKF and the UKF-based approach of \cite{cleach_2020_lucidgames}.
Fig. 1 (f) reports these times for each benchmark problem conducted on a 10-core processor (Apple M2).
We see that our derivative-based EKF requires significantly less time than the UKF-based approach. 
The great advantage that our EKF has over the UKF in terms of computational efficiency is because for each time step of estimation process, our EKF only involves solving $2$ optimal control problems (with one being a linear-quadratic problem), whilst the UKF-based approach requires the solution of an optimal control problem per sigma point.
As is standard in UKFs \cite{cleach_2020_lucidgames}, we selected $2N + 1$ sigma points, and therefore had to solve $2N + 1$ optimal control problems per time step.

\subsection{Complete versus Incomplete Measurement Simulation}
To examine the performance of our proposed EKF in the case of incomplete measurements, we simulated the single pendulum problem with both: 1) complete measurements of all states and controls ($F_t = I_{n + m}$); and, 2) incomplete measurements with only states ($ F_t = \begin{bmatrix}
        I_n & 0_{n \times m}
    \end{bmatrix}
$). 
The results are shown in Fig. 2.
We see that in this case our proposed EKF is able to solve the IOC problem in both cases, but its convergence speed is slightly slower.
We note that the existing recursive online discrete-time inverse optimal control approach of \cite{Molloy2020} requires complete state and control information, and hence cannot estimate the parameters from these incomplete measurements.

\begin{figure}[t!]
    \centering    
    \includegraphics[width=0.98\columnwidth]{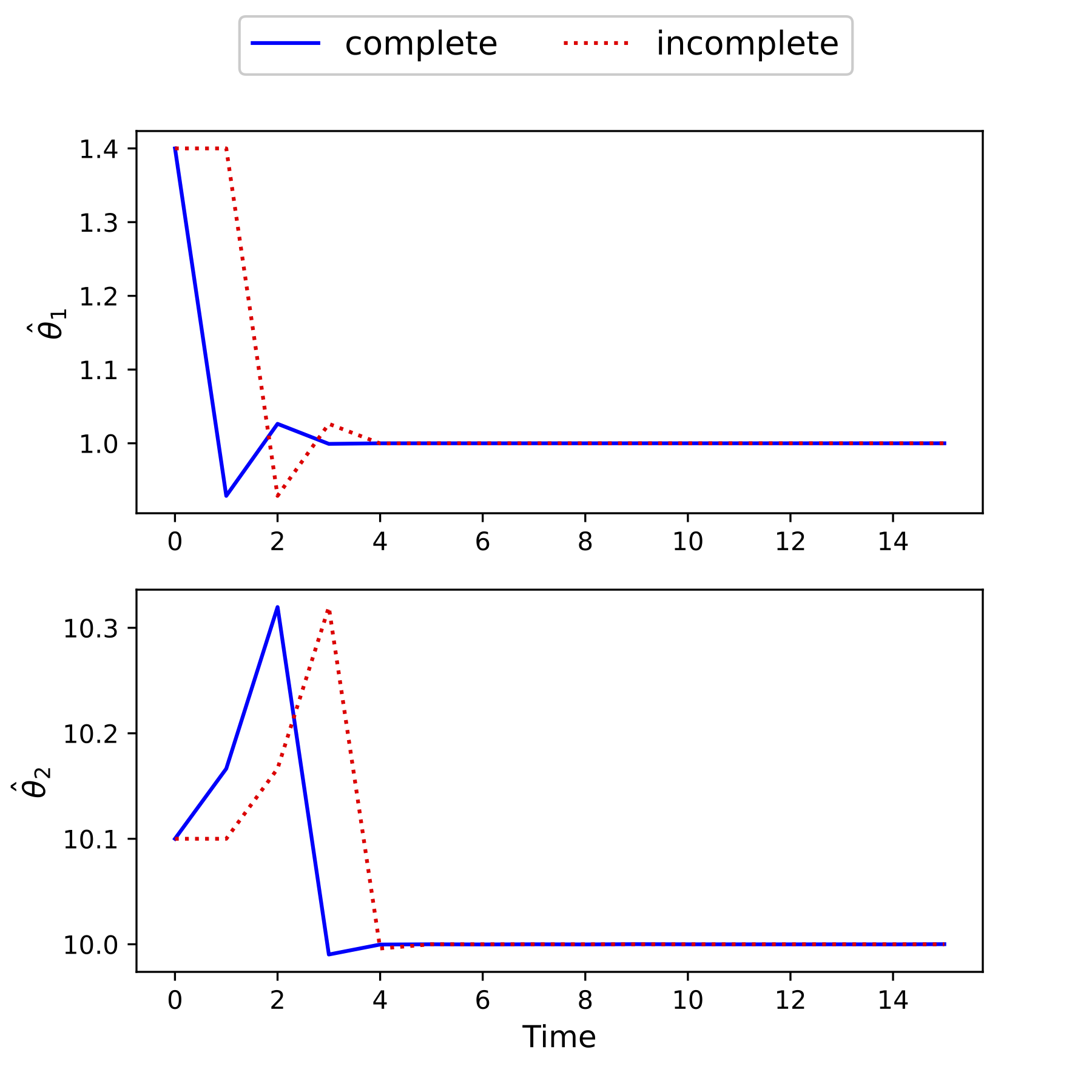}
    \caption{Estimated parameters from our proposed EKF with complete and incomplete measurements for solving the single pendulum benchmark problem.}
    \label{fig:enter-label}
\end{figure}

\section{Conclusion}
\label{sec:conclusion}
We posed the problem of online inverse optimal control with imperfect measurements as a nonlinear filtering problem, and proposed a computationally efficient extended Kalman filter (EKF).
Our EKF requires only a single pass through the data, involves the solution of at most two optimal control problems per time step, and is shown to offer provably bounded mean-squared error under mild conditions.
In contrast, existing approaches to inverse optimal control require combinations of multiple passes through the data, the solution of multiple optimal control problems per time step, and/or fail with unbounded error (in both theory and practice) if the data is imperfect.
We illustrated the efficiency and performance of our proposed EKF on several standard benchmark problems.

\bibliographystyle{IEEEtran}
\bibliography{references}  

\end{document}